\newtheorem{theorem}{Theorem}[section]
\newtheorem{lemma}[theorem]{Lemma}
\theoremstyle{definition}
\newtheorem{definition}[theorem]{Definition}
\newtheorem{example}[theorem]{Example}
\theoremstyle{remark}
\newtheorem{remark}[theorem]{Remark}
\numberwithin{equation}{section}
\begin{document}
\title[New extremal binary self-dual codes from codes over $R_{k,m}$]{New
extremal binary self-dual codes of lengths 66 and 68 from codes over $%
R_{k,m} $}
\author{Ab\.id\.in Kaya}
\author{Nes\.ibe T\"ufek\c{c}\.i}
\address{Department of Mathematics, Fatih University, 34500, Istanbul, Turkey%
}
\email{akaya@fatih.edu.tr,nesibe.tufekci@fatih.edu.tr}
\subjclass[2010]{Primary 94B05, 94B60, 94B65}
\keywords{extremal codes, codes over rings, Gray maps, quadratic
double-circulant codes}

\begin{abstract}
In this work, four circulant and quadratic double circulant (QDC)
constructions are applied to the family of the rings $R_{k,m}$.
Self-dual binary codes are obtained as the Gray images of self-dual
QDC codes over $R_{k,m}$. Extremal binary self-dual codes of length
64 are obtained as Gray images of $\lambda$-four circulant codes
over $R_{2,1}$ and $R_{2,2}$. Extremal binary self-dual codes of
lengths 66 and 68 are constructed by applying extension theorems to
the $\mathbb{F}_{2}$ and $R_{2,1}$ images of these codes. More
precisely, 11 new codes of length 66 and 39 new codes of length 68
are discovered. The codes with these weight enumerators are
constructed for the first time in literature. The results are
tabulated.
\end{abstract}

\maketitle

\section{Introduction}

An interesting family of linear codes are self-dual codes.\
Self-dual codes over finite fields have been studied extensively.
Some good binary codes such as the extended binary Golay code and
the extended quadratic residue
codes of parameters $\left[ 48,24,12\right] _{2}$ and $\left[ 104,52,20%
\right] _{2}$ are of this type. Such codes have also attracted a lot
of attention due to their connections to design theory.

Conway and Sloane gave an upper bound for the minimum distance of a
binary self-dual code in \cite{conway}. The bound was finalized in
\cite{rains} as follows; the minimum distance $d$ of a binary
self-dual code of length $n$ satisfies $d\leq 4\left[ n/24\right]
+6$ if $n\equiv 22\pmod{24}$ and $d\leq 4\left[ n/24\right] +4$,
otherwise. A self-dual code meeting this bound is called
$\emph{extremal}$. The possible weight enumerators of extremal
self-dual binary codes of lengths up to $64$ and $72$ were
determined in \cite{conway}. Since then, constructing new extremal
binary self-dual codes have been an attractive research area.
Different techniques such as circulant constructions, automorphism
groups and extensions are used to obtain new extremal binary
self-dual codes. For some of the works done in this direction we
refer the reader to \cite{dougherty,kim,tsai,yankov}.

Recently, some rings of characteristic $2$ have been used
effectively to construct new extremal binary self-dual codes. Lifts
were used in \cite{karadeniz68} and \cite{kaya}. Extension theorems
for self-dual codes were applied to codes over
$\mathbb{F}_{4}+u\mathbb{F}_{4}$ in \cite{kayayildiz}. Karadeniz et
al. used four circulant construction over
$\mathbb{F}_{2}+u\mathbb{F}_{2}$ in \cite{karadeniz}.

In this work, we give a generalization of four circulant
construction and combine the lifting and extending methods. The
computational algebra system MAGMA \cite{magma} is used for the
results. The rest of the paper is organized as follows: Section 2
consists of preliminaries about the family of rings $R_{k,m}$ and
codes over these. In Section 3, we introduce quadratic double
circulant codes over $R_{k,m}$. Section 4 includes constructions for
extremal singly-even binary self-dual
codes of length $64$ as Gray images of four circulant self-dual codes over $%
R_{2,1}$ and $R_{2,2}$. In Section 5, extremal binary self-dual
codes of lengths $66$ and $68$ with previoulsy unknown weight
enumerators are
constructed as extensions and as Gray image of extensions. More precisely, $%
11$ new codes of length $66$ and $39$ new codes of length $68$ are
constructed.

\section{Preliminaries\label{preliminaries}}

The ring $R_{k,m}$ was introduced in \cite{tufekci} as a generalization of $%
\mathbb{F}_{2}+u\mathbb{F}_{2}+v\mathbb{F}_{2}+uv\mathbb{F}_{2}$,
which was studied in \cite{yildizr2}. The ring is a commutative
local Frobenius ring of characteristic 2 that is defined as
\begin{equation*}
R_{k,m}=\mathbb{F}_{2}[u,v]/\left\langle
u^{k},v^{m},uv-vu\right\rangle \text{ where }k\geq m\geq 1.
\end{equation*}

Note that $R_{2,2}=\mathbb{F}_{2}+u\mathbb{F}_{2}+v\mathbb{F}_{2}+uv\mathbb{F%
}_{2}$, for more details on the structure of the ring we refer to \cite%
{tufekci}.

A linear $\mathcal{C}$ of length $n$ over $R_{k,m}$ is an-$R_{k,m}$
submodule of $R_{k,m}^{n}$. The dual $\mathcal{C}^{\perp }$ of a
linear code
$\mathcal{C}$ is defined with respect to the Euclidean inner product as%
\begin{equation*}
\mathcal{C}^{\perp }:=\left\{ (b_{1},b_{2},\ldots b_{n})\in
R_{k,m}^{n}\mid \sum\limits_{i=1}^{n}a_{i}b_{i}=0,\forall
(a_{1},a_{2},\ldots a_{n})\in \mathcal{C}\right\} .
\end{equation*}%
A code $\mathcal{C}$ is said to be \emph{self-orthogonal} if $\mathcal{C}%
\subseteq \mathcal{C}^{\perp },$ and \emph{self-dual} if $\mathcal{C}=%
\mathcal{C}^{\perp }$. A binary self-dual code is called
\emph{doubly-even} if the weight of any codeword is divisible by 4
and \emph{singly-even} otherwise. By \cite{wood}, the ring $R_{k,m}$
is suitable to study self-dual codes;

\begin{lemma}
\cite{tufekci}A linear code $\mathcal{C}$ of length $n$ over
$R_{k,m}$ satisfies $\left\vert \mathcal{C}\right\vert .\left\vert
\mathcal{C}^{\perp }\right\vert =\left\vert R_{k,m}\right\vert
^{n}$.
\end{lemma}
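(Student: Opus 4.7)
The plan is to exploit the fact, noted in the preliminaries, that $R_{k,m}$ is a finite (local) Frobenius ring of characteristic $2$, which by \cite{wood} admits a \emph{generating character} $\chi : R_{k,m} \to \mathbb{C}^{*}$, i.e.\ a character of the additive group whose kernel contains no nonzero ideal. Equivalently, the map sending $r$ to the character $x \mapsto \chi(rx)$ is an isomorphism from $R_{k,m}$ onto its Pontryagin dual, and this is the only consequence of the Frobenius hypothesis that I would use.

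First I would lift $\chi$ to $R_{k,m}^{n}$ by setting $\chi_{a}(x) := \chi\!\left(\sum_{i=1}^{n} a_{i} x_{i}\right)$ for $a \in R_{k,m}^{n}$. The assignment $a \mapsto \chi_{a}$ is a group homomorphism from $R_{k,m}^{n}$ into the character group $\widehat{R_{k,m}^{n}}$, and it is injective: if $\chi_{a} \equiv 1$ then each ideal $a_{i}R_{k,m}$ lies in $\ker \chi$, so the generating property forces $a_{i}=0$. A cardinality count (both groups have size $|R_{k,m}|^{n}$) promotes this injection to an isomorphism.

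Next I would identify the Euclidean dual $\mathcal{C}^{\perp}$ with the character-theoretic annihilator $\mathcal{C}^{\circ} := \{ a \in R_{k,m}^{n} : \chi_{a}(c) = 1 \text{ for all } c \in \mathcal{C}\}$. The inclusion $\mathcal{C}^{\perp} \subseteq \mathcal{C}^{\circ}$ is immediate from the definition of $\chi_{a}$. For the converse, if $\chi_{a}$ vanishes on $\mathcal{C}$, then because $\mathcal{C}$ is an $R_{k,m}$-submodule the element $rc$ also lies in $\mathcal{C}$ for every $r \in R_{k,m}$, so $\chi\!\left(r\sum a_{i}c_{i}\right)=1$; thus $\sum a_{i}c_{i}$ generates an ideal contained in $\ker \chi$, which must be $0$. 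This step is the only genuine use of Frobenius-ness beyond the isomorphism of the first step, and it is where I expect the main (albeit mild) subtlety to lie.

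Finally, under the isomorphism $R_{k,m}^{n} \cong \widehat{R_{k,m}^{n}}$ the set $\mathcal{C}^{\circ}$ corresponds to the characters of $R_{k,m}^{n}$ trivial on $\mathcal{C}$, which is canonically $\widehat{R_{k,m}^{n}/\mathcal{C}}$ and therefore has order $|R_{k,m}|^{n}/|\mathcal{C}|$ by Pontryagin duality for finite abelian groups. Combining this with the identification $\mathcal{C}^{\perp} = \mathcal{C}^{\circ}$ yields $|\mathcal{C}| \cdot |\mathcal{C}^{\perp}| = |R_{k,m}|^{n}$, as claimed.
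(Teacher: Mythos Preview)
Your argument is correct and is precisely the standard proof that the cardinality identity $|\mathcal{C}|\cdot|\mathcal{C}^{\perp}|=|R|^{n}$ holds for linear codes over any finite Frobenius ring, as developed in \cite{wood}. Each step is sound: the generating character exists because $R_{k,m}$ is Frobenius; the injectivity of $a\mapsto\chi_{a}$ follows exactly from the ``no nonzero ideal in $\ker\chi$'' condition; the identification $\mathcal{C}^{\perp}=\mathcal{C}^{\circ}$ uses linearity of $\mathcal{C}$ in the right way; and the final count is straight Pontryagin duality for finite abelian groups.

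As for comparison with the paper: there is nothing to compare. The paper does not prove this lemma at all---it simply quotes the result from \cite{tufekci}, prefacing it with the remark that by \cite{wood} the ring $R_{k,m}$ is ``suitable to study self-dual codes''. Your proposal supplies exactly the proof that underlies that remark, so you have filled in what the paper leaves as a citation. If anything, your write-up is more detailed than what one typically finds even in the source references; the only comment I would make is that once you have the isomorphism $R_{k,m}^{n}\cong\widehat{R_{k,m}^{n}}$ via $a\mapsto\chi_{a}$, the equality $\mathcal{C}^{\perp}=\mathcal{C}^{\circ}$ and the cardinality formula can be stated as a single sentence, so the exposition could be tightened if desired.
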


\begin{definition}
\cite{tufekci}Take an element $\bar{a}=\overline{a}_{0}+\overline{a}_{1}u+%
\overline{a}_{2}u^{2}+\cdots +\overline{a}_{k-2}u^{k-2}+\overline{a}%
_{k-1}u^{k-1}$ of $(\mathcal{R}_{k,1})^{n}$, where
$\overline{a}_{i}\in
\mathbb{F}_{2}^{n}$. Then define the Gray map $\phi _{k1}$ from $(\mathcal{R}%
_{k,1})^{n}$ to $(\mathbb{F}_{2})^{kn}$ as follows: when $k$ is even
let
\begin{equation*}
\begin{tabular}{ll}
$\phi _{k1}(\bar{a})=$ & $(\overline{a}_{0}+\overline{a}_{1}+\cdots +%
\overline{a}_{k-2}+\overline{a}_{k-1},\overline{a}_{1}+\cdots +\overline{a}%
_{k-2}+\overline{a}_{k-1},$ \\
& $\overline{a}_{1}+\cdots +\overline{a}_{k-2},\cdots ,\overline{a}_{\frac{k%
}{2}-1}+\overline{a}_{\frac{k}{2}}+\overline{a}_{\frac{k}{2}+1},\overline{a}%
_{\frac{k}{2}-1}+\overline{a}_{\frac{k}{2}},\overline{a}_{\frac{k}{2}})$%
\end{tabular}%
\end{equation*}%
and when $k$ is odd let
\begin{equation*}
\begin{tabular}{ll}
$\phi _{k1}(\bar{a})=$ & $(\overline{a}_{0}+\overline{a}_{1}+\cdots +%
\overline{a}_{k-2}+\overline{a}_{k-1},\overline{a}_{1}+\cdots +\overline{a}%
_{k-2}+\overline{a}_{k-1},$ \\
& $\overline{a}_{1}+\cdots +\overline{a}_{k-2},\cdots ,\overline{a}_{\frac{%
k-3}{2}}+\overline{a}_{\frac{k-1}{2}}+\overline{a}_{\frac{k+1}{2}},\overline{%
a}_{\frac{k-1}{2}}+\overline{a}_{\frac{k+1}{2}},\overline{a}_{\frac{k-1}{2}%
}).$%
\end{tabular}%
\end{equation*}
\end{definition}

In \cite{tufekci}, the Gray map is extended to $R_{k,m}$ by viewing
$R_{k,m}$ as a vector space over $R_{k,1}$ basis $\left\{
1,v,v^{2},\ldots ,v^{m-1}\right\} $ as follows;
\begin{equation*}
\begin{tabular}{ll}
$\phi _{km}(c)=$ & ${\large {(}\phi _{k1}(\sum\limits_{i=0}^{m-1}\overline{c}%
_{ki}),\phi _{k1}(\sum\limits_{i=1}^{m-1}\overline{c}_{ki}),\phi
_{k1}(\sum\limits_{i=1}^{m-2}\overline{c}_{ki}),}$ \\
& $\cdots ,\phi _{k1}(\sum\limits_{i=\frac{m}{2}-1}^{\frac{m}{2}+1}\overline{%
c}_{ki}),\phi _{k1}(\sum\limits_{i=\frac{m}{2}-1}^{\frac{m}{2}}\overline{c}%
_{ki}),\phi _{k1}(\sum\limits_{i=\frac{m}{2}}^{\frac{m}{2}}\overline{c}_{ki})%
{\large {)}.}$%
\end{tabular}%
\end{equation*}%
where $c=\sum\limits_{0\leq i\leq m-1}c_{ki}v^{i}$ $c_{ki}\in
R_{k,1}$. The Lee weight $w_{L}$ of an element $a$ of $R_{k,m}$ is
defined to be the Hammimg weight of the Gray image. The Gray map
$\phi _{km}$ preserves duality.

\begin{theorem}
\cite{tufekci}\label{dual} Let $\mathcal{C}$ be a self-dual code over $%
R_{k,m}$ of length $n$. Then $\phi _{km}(\mathcal{C})$ is a binary
self-dual code of length $kmn$. Moreover the Lee weight distribution
of $C$ is the same as the Hamming weight distribution of $\phi
_{km}(\mathcal{C})$.
\end{theorem}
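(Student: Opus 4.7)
The plan is to verify three properties of $\phi_{km}(\mathcal{C})$ inside $\mathbb{F}_{2}^{kmn}$: it is $\mathbb{F}_{2}$-linear, it has cardinality $2^{kmn/2}$, and it is self-orthogonal; self-duality then follows automatically. The linearity is immediate from the definitions of $\phi_{k1}$ and $\phi_{km}$, since every coordinate of $\phi_{km}(c)$ is an $\mathbb{F}_{2}$-linear combination of the coefficient vectors of $c$. For the cardinality, I would first check that $\phi_{km}$ is injective. The map $\phi_{k1}$ acts on the tuple $(\bar{a}_{0},\ldots,\bar{a}_{k-1})$ by an explicit $k\times k$ matrix over $\mathbb{F}_{2}$ whose bottom entry extracts $\bar{a}_{k/2}$ (resp.\ $\bar{a}_{(k-1)/2}$); reading off the Gray map this matrix is upper triangular up to a permutation and one checks directly that it has non-zero determinant. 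The passage from $\phi_{k1}$ to $\phi_{km}$ applies the same telescoping pattern in the $v$-direction with respect to the basis $\{1,v,\ldots,v^{m-1}\}$, so $\phi_{km}$ is injective as well. Combining this with Lemma 2.1, self-duality of $\mathcal{C}$ forces $|\mathcal{C}|^{2}=|R_{k,m}|^{n}=2^{kmn}$, hence $|\phi_{km}(\mathcal{C})|=|\mathcal{C}|=2^{kmn/2}$, the correct size for a self-dual binary code of length $kmn$.

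The main obstacle is self-orthogonality: to show $\phi_{km}(a)\cdot \phi_{km}(b)=0$ over $\mathbb{F}_{2}$ for all $a,b\in\mathcal{C}$. The plan is to work along the tower $\mathbb{F}_{2}\subset R_{k,1}\subset R_{k,m}$ in two stages. First, for $a,b\in R_{k,1}^{n}$, I would expand $\phi_{k1}(a)\cdot \phi_{k1}(b)$ in terms of the bilinear pairings $\bar{a}_{i}\cdot \bar{b}_{j}\in \mathbb{F}_{2}$ of the coefficient vectors and compare with the $u^{r}$-coefficients of $a\cdot b\in R_{k,1}$. The telescoping structure of the Gray map is designed precisely so that the cross terms collapse and the binary inner product reduces to a fixed $\mathbb{F}_{2}$-linear combination of the coefficients of $a\cdot b$ modulo $u^{k}$; these all vanish when $a\cdot b=0$. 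Second, since $\phi_{km}$ is obtained from $\phi_{k1}$ by applying an analogous telescoping pattern on the $v$-coefficients and then $\phi_{k1}$ block-wise, an entirely parallel calculation reduces $\phi_{km}(a)\cdot \phi_{km}(b)$ to a combination of coefficients of $a\cdot b\in R_{k,m}$, which are all zero by assumption. A small bookkeeping point will be the edge cases where $m$ or $k$ is small and some of the telescoping blocks collapse, but these follow the same pattern.

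With $\mathbb{F}_{2}$-linearity, the correct cardinality, and self-orthogonality established, $\phi_{km}(\mathcal{C})$ must coincide with its binary dual, which gives the self-duality statement. The final assertion on weight distributions is then automatic: the Lee weight on $R_{k,m}$ is by definition the Hamming weight of the Gray image, and extending this coordinatewise identifies the Lee weight distribution of $\mathcal{C}$ with the Hamming weight distribution of $\phi_{km}(\mathcal{C})$.
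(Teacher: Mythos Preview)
The paper does not give its own proof of this theorem: it is quoted verbatim from \cite{tufekci} and stated without argument, so there is nothing in the present paper to compare your proposal against.

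That said, your outline is the standard route and is sound. The three ingredients you isolate---$\mathbb{F}_{2}$-linearity of $\phi_{km}$, injectivity (hence the cardinality count via Lemma~2.1), and self-orthogonality---are exactly what is needed, and the weight-distribution statement is indeed immediate from the definition of Lee weight. The one place where your sketch is genuinely thin is the orthogonality step: the claim that $\phi_{k1}(a)\cdot\phi_{k1}(b)$ collapses to an $\mathbb{F}_{2}$-linear combination of the coefficients of $a\cdot b\in R_{k,1}$ amounts to checking that the matrix $M^{T}M$ (with $M$ the $k\times k$ transformation matrix of $\phi_{k1}$) has $(i,j)$-entry depending only on $i+j$ and vanishing when $i+j\geq k$. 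This is exactly the design principle behind the telescoping Gray map and is what is verified in \cite{tufekci} (and, in the special cases $R_{k}$, in \cite{doughertyrk}); you would need to carry out that matrix computation explicitly for general $k$, handling the even/odd parity of $k$ separately, but there is no hidden obstruction. The two-stage reduction through $R_{k,1}$ that you describe then handles the $v$-direction by the same mechanism.
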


Consider the projections%
\begin{eqnarray*}
\pi _{v} &:&R_{k,m}\rightarrow R_{k,1}\text{ defined by }v\mapsto
0\text{,}
\\
\pi _{u} &:&R_{k,1}\rightarrow \mathbb{F}_{2}\text{ defined by }u\mapsto 0%
\text{,}
\end{eqnarray*}%
then $\mu =\pi _{u}\circ \pi _{v}$ is a projection from $R_{k,m}$ to $%
\mathbb{F}_{2}$. The projections preserve orthogonality and
projection of a free self-dual code is self-dual. The code
$\mathcal{D}$ is said to be a lift of $\mathcal{C}$ if its
projection is $\mathcal{C}$. The following theorem gives a bound for
the minimum distance of a lift;

\begin{theorem}
\cite{tufekci}\label{bound} Let $\mathcal{C}$ be a linear code over
$R_{k,m}$ of length $n$ with minimum Lee weight $d$ and $\mu
(\mathcal{C})$ be its projection to $\mathbb{F}_{2}$. If $d^{\prime
}$denotes the minimum Hamming weight of $\mu (\mathcal{C}),$ we have
$d\leq 2md^{\prime }.$
\end{theorem}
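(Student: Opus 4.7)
The plan is to start from a minimum-weight codeword of the projection $\mu(\mathcal{C})$ and then, by multiplying its lift by a suitable annihilator of the maximal ideal of $R_{k,m}$, produce a codeword of $\mathcal{C}$ whose Lee weight is completely controlled. Concretely, I would choose $\bar{c}\in\mu(\mathcal{C})$ with $w_{H}(\bar{c})=d'$ and pick any $c\in\mathcal{C}$ with $\mu(c)=\bar{c}$; such a lift exists because $\mu(\mathcal{C})$ is by definition the image of $\mathcal{C}$ under $\mu$.

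The key observation is that $\alpha:=u^{k-1}v^{m-1}\in R_{k,m}$ annihilates the maximal ideal $(u,v)$, since $u\alpha=u^{k}v^{m-1}=0$ and $v\alpha=u^{k-1}v^{m}=0$. Writing each entry of $c$ as $c_{i}=\bar{c}_{i}+r_{i}$ with $\bar{c}_{i}\in\{0,1\}$ and $r_{i}\in(u,v)$, the codeword $c^{*}:=\alpha c\in\mathcal{C}$ (the containment uses that $\mathcal{C}$ is an $R_{k,m}$-submodule) satisfies $c^{*}_{i}=\bar{c}_{i}\alpha$. Thus $c^{*}$ is supported exactly on the support of $\bar{c}$, with every nonzero entry equal to $\alpha$, so
\begin{equation*}
w_{L}(c^{*})=d'\cdot w_{L}(u^{k-1}v^{m-1}).
\end{equation*}
The theorem will therefore follow once we show $w_{L}(u^{k-1}v^{m-1})\le 2m$.

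This last inequality is the step I expect to require the most bookkeeping, and I would handle it by dissecting the Gray map block by block. By definition of $\phi_{km}$, the image $\phi_{km}(u^{k-1}v^{m-1})\in\mathbb{F}_{2}^{km}$ splits into $m$ consecutive blocks of length $k$, each of the form $\phi_{k1}\bigl(\sum_{i\in I}c_{ki}\bigr)$ for a specific index set $I\subseteq\{0,\ldots,m-1\}$. Since only the coefficient of $v^{m-1}$ in $u^{k-1}v^{m-1}$ is nonzero and equals $u^{k-1}$, each such inner sum is either $0$ or $u^{k-1}$, and consequently each block contributes at most $w_{H}(\phi_{k1}(u^{k-1}))$ to the Hamming weight. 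Reading the formula for $\phi_{k1}$ directly then shows that only the first two coordinates of $\phi_{k1}(u^{k-1})$ can involve $\bar{a}_{k-1}$, because the index ranges of the later summands in both the even and odd cases never extend up to $k-1$; hence $w_{H}(\phi_{k1}(u^{k-1}))\le 2$. Summing across the $m$ blocks gives $w_{L}(u^{k-1}v^{m-1})\le 2m$, and combining with the displayed identity yields $d\le w_{L}(c^{*})\le 2m d'$, as required.
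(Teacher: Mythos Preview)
The paper does not prove this statement; it is quoted verbatim from \cite{tufekci}, so there is no in-paper argument to compare against. Your proof is correct and is essentially the standard one: lift a minimum-weight word of $\mu(\mathcal{C})$, multiply by the socle element $\alpha=u^{k-1}v^{m-1}$ to kill the non-unit parts, and bound the Lee weight of the result. Two minor remarks. First, you should make explicit that $c^{*}\neq 0$ (which follows since $\bar{c}\neq 0$ and $\alpha\neq 0$), so that the inequality $d\le w_{L}(c^{*})$ is legitimate. Second, your final estimate is looser than necessary: by exactly the same index-range reasoning you applied to $\phi_{k1}$, at most the first two of the $m$ blocks of $\phi_{km}(u^{k-1}v^{m-1})$ can be nonzero, so in fact $w_{L}(u^{k-1}v^{m-1})\le 4$ whenever $m\ge 2$, and your method actually yields the sharper bound $d\le 4d'$ in that range. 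This of course still implies $d\le 2md'$.
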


\section{Quadratic double circulant codes over $R_{k,m}$\label{qdcsection}}

Double circulant codes are a subfamily of quasi-cyclic codes. Double
circulant constructions are an effective method to form self-dual
codes. On the other hand, quadratic residue codes is another topic
of interest. In 2002, Gaborit defined quadratic double circulant
(QDC) codes as a generalization of quadratic residue codes in\
\cite{gaborit}. In this section, we study QDC codes over $R_{k,m}$.

Let $p$ be an odd prime and $Q_{p}\left( a,b,c\right) $ be the
circulant matrix\ with first row $r$ based on quadratic residues
modulo $p$ defined as $r\left[ 1\right] =a,$ $r\left[ i+1\right] =b$
if $i$ is a quadratic residue and $r\left[ i+1\right] =c$ if $i$ is
a quadratic non-residue modulo $p$. We state the special case of the
main theorem from \cite{gaborit} where $p$ is an odd prime;

\begin{theorem}
$($\cite{gaborit}\label{qdc}$)$ Let $p$ be an odd prime and let
$Q_{p}\left( a,b,c\right) $ be the circulant matrix with $a,b$ and
$c$ as the elements of
the ring $R_{k,m}$. If $p=4k+1$ then%
\begin{eqnarray}
&&Q_{p}\left( a,b,c\right) Q_{p}\left( a,b,c\right) ^{T}  \notag \\
&=&Q_{p}\left( a^{2}+2k\left( b^{2}+c^{2}\right) ,2ab-b^{2}+k\left(
b+c\right) ^{2},2ac-c^{2}+k\left( b+c\right) ^{2}\right) .
\label{mod41}
\end{eqnarray}%
If $p=4k+3$ then
\begin{eqnarray}
&&Q_{p}\left( a,b,c\right) Q_{p}\left( a,b,c\right) ^{T}  \notag \\
&=&Q_{p}(a^{2}+\left( 2k+1\right) \left( b^{2}+c^{2}\right)
,ab+ac+k\left(
b^{2}+c^{2}\right) +\left( 2k+1\right) bc,  \label{mod43} \\
&&ab+ac+k\left( b^{2}+c^{2}\right) +\left( 2k+1\right) bc).  \notag
\end{eqnarray}
\end{theorem}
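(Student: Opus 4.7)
The plan is to realize the circulant matrix $Q_p(a,b,c)$ as an element of the group algebra $R_{k,m}[\mathbb{Z}/p\mathbb{Z}]$, so that matrix multiplication corresponds to multiplication in this algebra and transposition to the involution $g\mapsto g^{-1}$. Writing $R$ and $N$ for the nonzero quadratic residues and non-residues modulo $p$ and setting $\chi_R=\sum_{r\in R}g^r$, $\chi_N=\sum_{n\in N}g^n$, I would associate to $Q_p(a,b,c)$ the element $\alpha = a + b\,\chi_R + c\,\chi_N$; then $Q_p(a,b,c)^T$ corresponds to $\alpha^{*} = a + b\,\chi_R^{*} + c\,\chi_N^{*}$, where $\chi_S^{*}$ is the image of $\chi_S$ under $g\mapsto g^{-1}$, and the theorem becomes the claim that $\alpha\alpha^{*}$ has the form $a_0 + a_1\chi_R + a_2\chi_N$ with $a_0,a_1,a_2$ as displayed.

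Next I would split by the class of $p$ modulo $4$. For $p=4k+1$ one has $-1\in R$, so $-R=R$, $-N=N$, $\chi_R^{*}=\chi_R$, $\chi_N^{*}=\chi_N$, and $\alpha^{*}=\alpha$; the task reduces to computing $\alpha^2$. For $p=4k+3$ one has $-1\in N$, so $-R=N$, $\chi_R^{*}=\chi_N$, $\chi_N^{*}=\chi_R$, and $\alpha^{*} = a + c\,\chi_R + b\,\chi_N$ is obtained from $\alpha$ by interchanging $b$ and $c$, which is what forces the second and third entries of the output to coincide in \eqref{mod43}.

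The main computation is the evaluation of the three products $\chi_R^2$, $\chi_N^2$ and $\chi_R\chi_N$ in the group algebra. The coefficient of $g^s$ in any such product depends only on whether $s=0$, $s\in R$, or $s\in N$, by the scaling bijection $x\mapsto sx$, and the resulting numbers are the classical cyclotomic numbers of order $2$. A direct count, using $|R|=|N|=(p-1)/2$ together with the location of $-1$, yields
\[
\chi_R^2 = 2k + (k-1)\chi_R + k\chi_N,\quad \chi_N^2 = 2k + k\chi_R + (k-1)\chi_N,\quad \chi_R\chi_N = k(\chi_R+\chi_N)
\]
when $p=4k+1$, and
\[
\chi_R^2 = k\chi_R + (k+1)\chi_N,\quad \chi_N^2 = (k+1)\chi_R + k\chi_N,\quad \chi_R\chi_N = (2k+1) + k(\chi_R+\chi_N)
\]
when $p=4k+3$. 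Expanding $\alpha\alpha^{*}$ and collecting the coefficients of $1$, $\chi_R$, and $\chi_N$, the grouping $kb^2+2kbc+kc^2 = k(b+c)^2$ produces the mixed square in the $4k+1$ case, and one recovers exactly \eqref{mod41} and \eqref{mod43}.

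The hard part is verifying the three structure constants case by case, since the counts of solutions of $x_1+x_2=0$ and $x_1+x_2=1$ inside $R\times R$ and $N\times N$ behave differently when $-1\in R$ versus $-1\in N$; once those group-algebra identities are in hand, the remainder is a routine expansion of the product of two degree-one elements.
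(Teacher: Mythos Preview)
Your argument is correct. The group-algebra realization of circulant matrices, together with the order-$2$ cyclotomic numbers, is exactly the machinery that yields \eqref{mod41} and \eqref{mod43}, and your case split on whether $-1$ is a square modulo $p$ is the right way to handle the transpose. I checked your structure constants and the final expansions in both cases and they match the statement.

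That said, there is nothing to compare against: the paper does not give its own proof of this theorem. It is stated as a quotation of the special case of Gaborit's main theorem in \cite{gaborit} (restricted to prime $p$ and to coefficients in $R_{k,m}$), and the paper simply invokes it as a tool in the proofs of the subsequent self-duality theorems. Your proof is essentially the standard one from \cite{gaborit}, phrased in group-algebra language; Gaborit works with the quadratic-residue circulant directly but the content is the same counting of solutions to $x+y=s$ in $R\times R$, $N\times N$, and $R\times N$.

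One small remark: you describe the verification of the structure constants as ``the hard part,'' but in fact once you observe that the coefficient of $g^{0}$ in $\chi_R^{2}$ is $|R|$ or $0$ according as $-1\in R$ or $-1\in N$, and that the total degree of each product is $|R|^{2}$ or $|R|\cdot|N|$, the remaining constants are forced by the single nontrivial cyclotomic number (e.g.\ the count of consecutive residues). So the computation is shorter than your outline suggests.
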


\begin{definition}
$($\cite{gaborit}$)$ The code generated by $P_{p}\left( a,b,c\right)
=\left(
\begin{array}{c|c}
I_{p} & Q_{p}\left( a,b,c\right)
\end{array}%
\right) $ over $R_{k,m}$ is called a quadratic double circulant code
and is denoted by $\mathcal{QDC}_{p}\left( R_{k,m}\right) \left(
a,b,c\right) $.
\end{definition}

\begin{example}
Consider the code $\mathcal{QDC}_{p}\left( R_{2,2}\right) \left(
1+v+uv,u,v\right) $ that is generated by%
\begin{equation*}
\left[ \enspace I_{5}\enspace%
\begin{array}{|ccccc}
1+v+uv & u & v & v & u \\
u & 1+v+uv & u & v & v \\
v & u & 1+v+uv & u & v \\
v & v & u & 1+v+uv & u \\
u & v & v & u & 1+v+uv%
\end{array}%
\right] .
\end{equation*}%
Self-duality of the code is easily checked by Theorem \ref{qdc}.
Moreover, each row of the generator matrix has Lee weight $8$, which
means the binary
image of the code is doubly-even. It is an extremal self-dual $\left[ 40,20,8%
\right] $ code with partial weight distribution $1+285z^{8}+21280z^{12}+%
\cdots $.
\end{example}

In the following, we define a special subfamily of units and non-units in $%
R_{k,m}$;

\begin{definition}
\label{basic}An element $r$ of $R_{k,m}$ is called a \emph{basic
non-unit} if $r^{2}=0$ and a \emph{basic unit} if $r^{2}=1$.
\end{definition}

It is easily observed that $1+r$ is a basic unit if and only if $r$
is a basic non-unit.

In the following theorems families of self-dual QDC codes over
$R_{k,m}$ are given.

\begin{theorem}
Let $a$ be an element of $R_{k,m}$ such that $a^{3}=0$ and $p$ be a
prime with $p\equiv 3\pmod{8}$ then the codes
\begin{equation*}
\mathcal{QDC}_{p}\left( R_{k,m}\right) \left( a,1,a+a^{2}\right) \text{ and }%
\mathcal{QDC}_{p}\left( R_{k,m}\right) \left(
a,1+a^{2},a+a^{2}\right)
\end{equation*}
are self-dual. The constructions are called $I$ and $II$,
respectively.
\end{theorem}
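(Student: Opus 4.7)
The plan is to apply Theorem \ref{qdc} and verify in each case that $Q_{p}Q_{p}^{T}=I_{p}$ (working in characteristic $2$), then invoke a cardinality argument to upgrade self-orthogonality to self-duality. Since the generator matrix $P_{p}(a,b,c)=[I_{p}\mid Q_{p}(a,b,c)]$ is systematic, the code is free of rank $p$, so $|\mathcal{C}|=|R_{k,m}|^{p}$, and by the cardinality lemma $|\mathcal{C}^{\perp}|=|R_{k,m}|^{p}$ as well; thus it is enough to show $I_{p}+Q_{p}Q_{p}^{T}=0$.

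First I would unpack the congruence $p\equiv 3\pmod 8$. Writing $p=4\ell+3$ in the notation of Theorem \ref{qdc} (using $\ell$ in place of the paper's $k$ to avoid a name clash with the ring parameter), the condition $p\equiv 3\pmod 8$ forces $\ell$ to be even. Since $R_{k,m}$ has characteristic $2$, this means $\ell\equiv 0$ and $2\ell+1\equiv 1$. Consequently the right-hand side of (\ref{mod43}) simplifies dramatically, and the self-orthogonality requirement $Q_{p}Q_{p}^{T}=I_{p}=Q_{p}(1,0,0)$ reduces to the two ring equations
\begin{equation*}
a^{2}+b^{2}+c^{2}=1,\qquad ab+ac+bc=0.
\end{equation*}

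The remaining work is a direct substitution, using $a^{3}=0$ (which also gives $a^{i}=0$ for $i\geq 3$) and characteristic $2$. For construction I, with $b=1$ and $c=a+a^{2}$, I compute $b^{2}+c^{2}=1+a^{2}+a^{4}=1+a^{2}$, so $a^{2}+b^{2}+c^{2}=1$; and $ab+ac+bc=a+a^{2}+a^{3}+a+a^{2}=0$. For construction II, with $b=1+a^{2}$ and $c=a+a^{2}$, I get $b^{2}=1$, $c^{2}=a^{2}$, so again $a^{2}+b^{2}+c^{2}=1$; and $ab=a$, $ac=a^{2}$, $bc=a+a^{2}+a^{3}+a^{4}=a+a^{2}$, summing to $0$.

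I do not anticipate any genuine obstacle: the only subtle point is keeping track of the integer coefficients $\ell$ and $2\ell+1$ modulo $2$ when passing from Theorem \ref{qdc} to the simplified equations, and recognising that $p\equiv 3\pmod 8$ is exactly what kills the cross term in the formulas. Everything else is bookkeeping with the relation $a^{3}=0$ inside a characteristic-$2$ ring.
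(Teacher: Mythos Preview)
Your proposal is correct and follows essentially the same route as the paper: apply Theorem \ref{qdc} in the case $p=4\ell+3$, use $p\equiv 3\pmod 8$ (so $\ell$ is even) together with $\mathrm{char}(R_{k,m})=2$ to reduce the self-orthogonality condition to the pair $a^{2}+b^{2}+c^{2}=1$ and $ab+ac+bc=0$, and then verify these by direct substitution using $a^{3}=0$. Your treatment is in fact slightly more explicit than the paper's, since you spell out the cardinality step (free of rank $p$, hence $|\mathcal{C}|=|\mathcal{C}^{\perp}|$) that turns self-orthogonality into self-duality, whereas the paper passes over this silently.
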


\begin{proof}
Since $p=8k+3$, $a^{3}=0$ and $char\left( R_{k,m}\right) =2,$ by the
equation \ref{mod43} we have
\begin{eqnarray*}
&&Q_{p}\left( a,1,a+a^{2}\right) Q_{p}\left( a,1,a+a^{2}\right) ^{T} \\
&=&Q_{p}\left( a^{2}+1+\left( a+a^{2}\right) ^{2},a+a\left(
a+a^{2}\right) +\left( a+a^{2}\right) ,a+a\left( a+a^{2}\right)
+\left( a+a^{2}\right)
\right) \\
&=&Q_{p}\left( 0,1,1\right) =I_{p},
\end{eqnarray*}%
which implies that $\mathcal{QDC}_{p}\left( R_{k,m}\right) \left(
a,1,a+a^{2}\right) $ is self-dual. By analogous steps $\mathcal{QDC}%
_{p}\left( R_{k,m}\right) \left( a,1+a^{2},a+a^{2}\right) $ is also
self-dual.
\end{proof}

The characterization of non-units given in Definition \ref{basic}
can be used to construct self-dual codes as follows;

\begin{theorem}
Let $a$ and $b$ be two basic non-units in $R_{k,m}$ and $p$ be a
prime then the code $\mathcal{QDC}_{p}\left( R_{k,m}\right) \left(
1+a,a,b\right) $ is self-dual whenever $p\equiv 1\pmod{4}$.
Moreover, $\mathcal{QDC}_{p}\left(
R_{k,m}\right) \left( a,1+b,a\right) $ is self-dual if $ab=0$ and $p\equiv 3%
\pmod{8}$. The constructions are called as $III$ and $IV$,
respectively.
\end{theorem}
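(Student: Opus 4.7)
The plan is to apply Theorem \ref{qdc} directly and check that $Q_p Q_p^T = I_p$ in both constructions; this immediately gives $[I_p \mid Q_p][I_p \mid Q_p]^T = I_p + Q_p Q_p^T = 0$, so the generator rows are mutually orthogonal, and a cardinality count via Lemma~2.1 upgrades self-orthogonality to self-duality. The three ingredients that do all the work are characteristic $2$ (every even integer coefficient vanishes), the defining relations $a^2 = b^2 = 0$ from Definition~\ref{basic}, and commutativity of $R_{k,m}$.

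For Construction III, write $p = 4\ell + 1$ and apply identity (\ref{mod41}) with $(a,b,c) := (1+a, a, b)$. The first entry, $(1+a)^2 + 2\ell(a^2 + b^2)$, collapses to $1 + a^2 = 1$. The second and third entries are sums of factor-of-two terms, of $\pm a^2$ or $\pm b^2$, and of $\ell(a+b)^2 = \ell(a^2 + 2ab + b^2)$, all of which vanish. Hence $Q_p Q_p^T = Q_p(1,0,0) = I_p$.

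For Construction IV, write $p = 4\ell + 3$. The refinement from $p \equiv 3 \pmod 4$ to $p \equiv 3 \pmod 8$ is precisely what is needed: it forces $\ell$ to be even, so that $\ell \equiv 0$ and $2\ell+1 \equiv 1$ in $R_{k,m}$. Apply identity (\ref{mod43}) with $(a,b,c) := (a, 1+b, a)$; using $(1+b)^2 = 1$ and $a^2 = 0$, the first entry simplifies to $0 + 1 \cdot (1 + 0) = 1$. The common value of the second and third entries expands to $a(1+b) + a^2 + \ell((1+b)^2 + a^2) + (2\ell+1)(1+b)a$; invoking the parity of $\ell$, commutativity of $R_{k,m}$, and the hypothesis $ab = 0$, this reduces to $0$ in characteristic $2$. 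Once again $Q_p Q_p^T = I_p$.

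The only subtle step is the parity observation for Construction IV: one must notice that the strengthening from $p \equiv 3 \pmod 4$ to $p \equiv 3 \pmod 8$ is exactly what forces the otherwise stubborn $k$-coefficient in (\ref{mod43}) to reduce to zero modulo $2$. Everything else is direct arithmetic in $R_{k,m}$, of precisely the kind already rehearsed in the proof of the previous theorem.
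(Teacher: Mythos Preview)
Your proof is correct and follows essentially the same route as the paper: apply Theorem~\ref{qdc}, reduce the integer coefficients modulo $2$, and use $a^{2}=b^{2}=0$ to collapse each product $Q_{p}Q_{p}^{T}$ to $Q_{p}(1,0,0)=I_{p}$. Your explicit remark that $p\equiv 3\pmod 8$ forces the parameter $\ell$ in~(\ref{mod43}) to be even is exactly what the paper encodes by writing $p=8k+3$, and your added sentence upgrading self-orthogonality to self-duality via the cardinality argument is a clarification the paper leaves implicit.
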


\begin{proof}
Let $p=4k+1$ be a prime, $a$ and $b$ be basic non-units in $R_{k,m}$
then by equation \ref{mod41}
\begin{eqnarray*}
&&Q_{p}\left( 1+a,a,b\right) Q_{p}\left( 1+a,a,b\right) ^{T} \\
&=&\left\{
\begin{array}{c}
Q_{p}\left( \left( 1+a\right) ^{2},a^{2},b^{2}\right) \text{ if
}k\text{ is
even} \\
Q_{p}\left( \left( 1+a\right) ^{2},b^{2},a^{2}\right) \text{ if
}k\text{ is
odd}%
\end{array}%
\right.  \\
&=&Q_{p}\left( 1,0,0\right) =I_{p}.
\end{eqnarray*}%
Hence, the code $\mathcal{QDC}_{p}\left( R_{k,m}\right) \left(
1+a,a,b\right) $ is self-dual.

Let $p=8k+3$, $a$ and $b$ be basic non-units in $R_{k,m}$ with
$ab=0$. Then,
since $char\left( R_{k,m}\right) =2,$ by equation \ref{mod43} we have%
\begin{eqnarray*}
&&Q_{p}\left( a,1+b,a\right) Q_{p}\left( a,1+b,a\right) ^{T} \\
&=&Q_{p}\left( 1,a\left( 1+b\right) +\left( 1+b\right) a,a\left(
1+b\right)
+\left( 1+b\right) a\right) \\
&=&Q_{p}\left( 1,0,0\right) =I_{p}.
\end{eqnarray*}%
Therefore, the code $\mathcal{QDC}_{p}\left( R_{k,m}\right) \left(
a,1+b,a\right) $ is self-dual.
\end{proof}
We list some good QDC codes over $R_{k,m}$ in Table \ref{tab:qdc}.
\begin{table}[H]
\caption{Some examples of self-dual QDC codes over $R_{k,m}$}
\label{tab:qdc}\centering
\begin{tabular}{||c|c|c|c|c|l||}
\hline $R$ & $p$ & Construction & $a,\left( b\right) $ & The binary
image & Comment
\\ \hline\hline
$R_{2,1}$ & $5$ & $III$ & $u,0$ & $\left[ 20,10,4\right] $ & extremal \\
\hline $R_{2,2}$ & $5$ & $III$ & $u,v$ & $\left[ 40,20,8\right] $ &
extremal singly-even \\ \hline $R_{2,2}$ & $5$ & $III$ & $u+uv,v$ &
$\left[ 40,20,8\right] $ & extremal doubly-even \\ \hline
$R_{2,1}$ & $11$ & $I,$ $II$ & $u$ & $\left[ 44,22,8\right] $ & extremal \\
\hline $R_{3,1}$ & $11$ & $I$ & $u$ & $\left[ 66,33,12\right] $ &
extremal \\ \hline
$R_{3,1}$ & $11$ & $II$ & $u$ & $\left[ 66,33,12\right] $ & extremal \\
\hline
$R_{2,2}$ & $11$ & $II$ & $uv$ & $\left[ 88,44,12\right] $ & singly-even \\
\hline $R_{2,2}$ & $11$ & $IV$ & $u,uv$ & $\left[ 88,44,12\right] $
& doubly-even
\\ \hline
$R_{4,1}$ & $11$ & $I$ & $u^{3}$ & $\left[ 88,44,12\right] $ &
singly-even
\\ \hline
$R_{3,1}$ & $19$ & $I$ & $u$ & $\left[ 114,57,16\right] $ & - \\
\hline $R_{3,2}$ & $11$ & $II$ & $v+uv$ & $\left[ 132,66,12\right] $
& - \\ \hline $R_{4,1}$ & $19$ & $I$ & $u^{3}$ & $\left[
152,76,16\right] $ & singly-even
\\ \hline
\end{tabular}%
\end{table}

\section{Constructions for self-dual codes over $R_{k,m}$ by $\protect%
\lambda $-circulant matrices\label{64section}}

In this section, the four circulant construction is generalized to $\lambda $%
-circulant matrices. Extremal singly-even binary self-dual codes of length $%
64$ are constructed as Gray images of four circulant codes over
$\mathbb{F}_{2}$ and $R_{2,2}$. The codes are going to be used in
Section \ref{newcodes} to construct new binary self-dual codes of
lengths $66$ and $68$.

The possible weight enumerators of singly-even extremal self-dual
codes of length $64$ are characterized in \cite{conway} as:
\begin{eqnarray*}
W_{64,1} &=&1+(1312+16\beta )y^{12}+(22016-64\beta )y^{14}+\cdots
\text{
where }14\leq \beta \leq 104, \\
W_{64,2} &=&1+(1312+16\beta )y^{12}+(23040-64\beta )y^{14}+\cdots
\text{ where }0\leq \beta \leq 277\text{.}
\end{eqnarray*}
Recently, codes with $\beta =$29, 39, 53 and 60 in $W_{64,1}$ and
codes with $\beta =$51, 58 in $W_{64,2}$ are constructed in
\cite{yankov} and a code with $\beta =80$ in $W_{64,2}$ is
constructed in \cite{karadeniz}. Together with these the existence
of such codes is now known for $\beta =$14, 18, 22, 25, 29, 32, 36,
39, 44, 46, 53, 60, 64 in $W_{64,1}$ and for $\beta =$0, 1, 2, 4, 5,
6, 8, 9, 10, 12, 13, 14, 16, 17, 18, 20, 21, 22, 23, 24, 25, 28, 29,
30, 32, 33, 36, 37, 38, 40, 41, 44, 48, 51, 52, 56, 58, 64, 72, 80,
88, 96, 104, 108, 112, 114, 118, 120, 184 in $W_{64,2}$.

The four circulant construction was defined in \cite{betsumiya}.
\begin{definition}
Let $r=\left( r_{1},r_{2},\ldots ,r_{n}\right) $ be an element of
$\left(
R_{k,m}\right) ^{n}$. The $\lambda $-cyclic shift of $r$ is defined as $%
\sigma _{\lambda }\left( r\right) =\left( \lambda
r_{n},r_{1},r_{2},\ldots
,r_{m-1}\right) $ where $\lambda \in R$. A square matrix is called $\lambda $%
-circulant if every row is the $\lambda $-cyclic shift of the
previous one.
\end{definition}
Since $\lambda $-circulant matrices commute with each other the four
circulant construction can be extended to $\lambda $-circulant
matrices. We have the following result:

\begin{theorem}
\label{fourcirc} Let $\mathcal{C}$ be the linear code over $R_{k,m}$
of length $4n$ generated by the four circulant matrix
\begin{equation*}
G:=\left[ \enspace I_{2n}\enspace%
\begin{array}{|cc}
A & B \\
B^{T} & A^{T}%
\end{array}%
\right]
\end{equation*}%
where $A$ and $B$ are $\lambda $-circulant $n\times n$ matrices over $%
R_{k,m} $ satisfying $AA^{T}+BB^{T}=I_{n}$. Then the code
$\mathcal{C}$ is called a $\lambda $-four circulant code over
$R_{k,m}$. The code $\mathcal{C} $ and its binary image are
self-dual.
\end{theorem}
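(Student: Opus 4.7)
The plan is to show first that $\mathcal{C}$ is self-orthogonal by computing $GG^{T}$ block by block, and then to use the cardinality lemma to upgrade self-orthogonality to self-duality; the binary part will then be immediate from Theorem \ref{dual}.

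Write $M=\begin{pmatrix} A & B \\ B^{T} & A^{T}\end{pmatrix}$, so $G=[\,I_{2n}\mid M\,]$ and $GG^{T}=I_{2n}+MM^{T}$. A direct block multiplication gives
\begin{equation*}
MM^{T}=\begin{pmatrix} AA^{T}+BB^{T} & AB+BA \\ B^{T}A^{T}+A^{T}B^{T} & B^{T}B+A^{T}A\end{pmatrix}.
\end{equation*}
The off-diagonal blocks vanish because $\lambda$-circulant matrices over $R_{k,m}$ commute (the fact cited just before the theorem), so $AB=BA$ and $A^{T}B^{T}=B^{T}A^{T}$, and $\mathrm{char}(R_{k,m})=2$ kills the sum. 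The top-left block equals $I_{n}$ by hypothesis. For the bottom-right block, observe that the transpose of a $\lambda$-circulant is again a $\lambda$-circulant (up to relabeling), so $A$ and $A^{T}$ commute and hence $A^{T}A=AA^{T}$; likewise $B^{T}B=BB^{T}$. Therefore $B^{T}B+A^{T}A=AA^{T}+BB^{T}=I_{n}$. Combining, $MM^{T}=I_{2n}$ and so $GG^{T}=I_{2n}+I_{2n}=0$, which shows $\mathcal{C}\subseteq\mathcal{C}^{\perp}$.

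Next, since $G$ is in standard form $[\,I_{2n}\mid M\,]$, the code $\mathcal{C}$ is a free $R_{k,m}$-module of rank $2n$, giving $|\mathcal{C}|=|R_{k,m}|^{2n}$. By Lemma 2.1 applied at length $4n$, we have $|\mathcal{C}|\cdot|\mathcal{C}^{\perp}|=|R_{k,m}|^{4n}$, so $|\mathcal{C}^{\perp}|=|R_{k,m}|^{2n}=|\mathcal{C}|$. Combined with the inclusion $\mathcal{C}\subseteq\mathcal{C}^{\perp}$, this forces $\mathcal{C}=\mathcal{C}^{\perp}$. Finally, Theorem \ref{dual} shows that $\phi_{km}(\mathcal{C})$ is a binary self-dual code of length $4kmn$.

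The only subtlety is the commutativity used in the block computation, and in particular the claim that a $\lambda$-circulant commutes with its own transpose; this is the step I expect to need the most care, and it is handled by noting that the transpose of a $\lambda$-circulant lies in the same commutative algebra of $\lambda$-circulants (so that $AA^{T}=A^{T}A$ and likewise for $B$), after which the characteristic-$2$ cancellation of the off-diagonal blocks and the substitution of the hypothesis $AA^{T}+BB^{T}=I_{n}$ finish everything mechanically.
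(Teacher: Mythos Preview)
The paper does not actually give a proof of this theorem; it only prefaces the statement with the remark that $\lambda$-circulant matrices commute. Your block computation of $GG^{T}$, followed by the cardinality argument via Lemma~2.1 and the appeal to Theorem~\ref{dual}, is exactly the intended route and is written cleanly.

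There is, however, one genuine gap. Your handling of the bottom-right block relies on the assertion that ``the transpose of a $\lambda$-circulant is again a $\lambda$-circulant,'' so that $A^{T}$ lies in the commutative algebra generated by $P_{\lambda}$ and hence $A^{T}A=AA^{T}$. This is \emph{not} true for arbitrary $\lambda$: writing out the entries shows that $A^{T}$ is $\lambda$-circulant precisely when $\lambda^{2}=1$. All of the paper's examples use basic units ($\lambda=1+u,\,1+u+v,\ldots$), for which $\lambda^{2}=1$ in $R_{k,m}$, so your argument is valid there; but the theorem as stated places no such restriction on $\lambda$.

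The clean fix, valid for every $\lambda$, is to use persymmetry instead of commutativity. A direct check of the entries of a $\lambda$-circulant gives $A_{ij}=A_{\,n+1-j,\,n+1-i}$, i.e.\ $A^{T}=JAJ$ where $J$ is the reversal permutation matrix. Hence
\[
A^{T}A+B^{T}B \;=\; JAJA+JBJB \;=\; J\bigl(AJAJ+BJBJ\bigr) \;=\; J\bigl(AA^{T}+BB^{T}\bigr)J \;=\; JI_{n}J \;=\; I_{n},
\]
and the rest of your argument goes through unchanged. With this correction (or with the added hypothesis $\lambda^{2}=1$, which suffices for all the applications in the paper), your proof is complete.
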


Four circulant codes of length $32$ over $R_{2,1}$ have been studied
extensively in \cite{karadeniz} and the codes with weight
enumerators $\beta =0$, $16$, $32$, $48$ and $80$ in $W_{64,2}$ were
obtained. The code with the weight enumerator $\beta =80$ in
$W_{64,2}$ is the first such code in literature. For further
reference we name this code as $\mathcal{C}_{64,80}$ which is the
four circulant code over $R_{2,1}$ with
\begin{equation*}
r_{A}=\left( u,0,0,0,u,1,u,1+u\right) \text{ and }r_{B}=\left(
u,u,0,1,1,1+u,1+u,1+u\right) \text{.}
\end{equation*}

By considering $\left( 1+u\right) $-four circulant codes of length
$32$ over $R_{2,1}$ we were able to obtain the binary codes with
weight enumerators for $\beta =8k$ in $W_{64,2}$ where $0\leq k\leq
9$. These are listed in Table \ref{tab:R1}.
\begin{table}[H]
\caption{$\left( 1+u\right) $-four circulant codes over $R_{2,1}$}
\label{tab:R1}\centering
\begin{tabular}{||l|l|l|c|c||}
\hline $\mathcal{L}_{i}$ & $r_{A}$ & $r_{B}$ & $\beta $ in
$W_{64,2}$ & $\left\vert Aut\left( \mathcal{L}_{i}\right)
\right\vert $ \\ \hline\hline
$\mathcal{L}_{1}$ & $\left( u333uuu0\right) $ & $\left( 11311010\right) $ & $%
8$ & $2^{5}$ \\ \hline
$\mathcal{L}_{2}$ & $\left( u111000u\right) $ & $\left( 11333u1u\right) $ & $%
24$ & $2^{5}$ \\ \hline
$\mathcal{L}_{3}$ & $\left( u131u0uu\right) $ & $\left( 31313030\right) $ & $%
72$ & $2^{5}$ \\ \hline
$\mathcal{L}_{4}$ & $\left( 33uu3110\right) $ & $\left( 113u00u3\right) $ & $%
0$ & $2^{5}$ \\ \hline
$\mathcal{L}_{5}$ & $\left( 330u3110\right) $ & $\left( 1310uuu1\right) $ & $%
16$ & $2^{5}$ \\ \hline
$\mathcal{L}_{6}$ & $\left( 33uu3130\right) $ & $\left( 331u0u01\right) $ & $%
32$ & $2^{5}$ \\ \hline
$\mathcal{L}_{7}$ & $\left( 11u03130\right) $ & $\left( 131u0003\right) $ & $%
48$ & $2^{5}$ \\ \hline
$\mathcal{L}_{8}$ & $\left( 310u113u\right) $ & $\left( 1330uu03\right) $ & $%
64$ & $2^{6}$ \\ \hline
$\mathcal{L}_{9}$ & $\left( u1110u3u\right) $ & $\left( 30u03113\right) $ & $%
8$ & $2^{5}$ \\ \hline $\mathcal{L}_{10}$ & $\left( 0133uu30\right)
$ & $\left( 10001113\right) $ & $24$ & $2^{5}$ \\ \hline
$\mathcal{L}_{11}$ & $\left( u111001u\right) $ & $\left(
3u0u1311\right) $ & $40$ & $2^{5}$ \\ \hline $\mathcal{L}_{12}$ &
$\left( 0133u01u\right) $ & $\left( 10001311\right) $ & $56$ &
$2^{5}$ \\ \hline
\end{tabular}%
\end{table}
In order to construct extremal binary self-dual codes of length $64$
as Gray images of $\lambda $-four circulant codes of length $16$
over $R_{2,2}$ we
lift binary codes to codes over $R_{2,1}$ and then lift these to codes over $%
R_{2,2}$. Theorem \ref{bound} tells us the minimum distance of the
codes to be lifted. We demonstrate this in the following example;

\begin{example}
Let $\mathcal{C}$ be the four circulant code of length $16$ over $\mathbb{F}%
_{2}$ with $r_{A}=\left( 1,0,0,0\right) $ and $r_{B}=\left( 1,1,1,1\right) $%
. Then $\mathcal{C}$ is a singly-even $\left[ 16,8,4\right] ~$code.
The code $\mathcal{C}$ is lifted to $\mathcal{C}^{\prime }$, which
is the $\left(
1+u\right) $-four circulant code of length $16$ over $R_{2,1}$ with $%
r_{A}^{\prime }=\left( 1,0,u,u\right) $ and $r_{B}^{\prime }=\left(
1,1+u,1,1+u\right) $. The binary image $\phi
_{21}(\mathcal{C}^{\prime })$ of $\mathcal{C}^{\prime }$ is a
self-dual $\left[ 32,16,6\right] $ code. Then $\mathcal{C}^{\prime
}$ is lifted to the $\mathcal{C}^{\prime \prime }$ that is the
$\left( 1+u+v+uv\right) $-four circulant code of length $16$ over
$R_{2,2}$ with
\begin{equation*}
r_{A}^{\prime \prime }=\left( 1,0,u,u+v+uv\right) \text{ and
}r_{B}^{\prime \prime }=\left( 1+v+uv,1+u,1+v,1+u+v\right) .
\end{equation*}%
The binary code $\phi _{22}(\mathcal{C}^{\prime \prime })$ is and
extremal
singly-even binary self-dual code of length $64$ with weight enumerator $%
\beta =0$ in $W_{64,2}$. Note that, $\pi _{v}\left(
\mathcal{C}^{\prime \prime }\right) =\mathcal{C}^{\prime }$, $\pi
_{u}\left( \mathcal{C}^{\prime
}\right) =\mathcal{C}$ and $\mu \left( \mathcal{C}^{\prime \prime }\right) =%
\mathcal{C}$.
\end{example}

In order to fit the upcoming tables we use hexadecimal number sytem.
The one-to-one correspondence between hexadecimals and binary $4$
tuples is as follows:
\begin{eqnarray*}
0 &\leftrightarrow &0000,\ 1\leftrightarrow 0001,\ 2\leftrightarrow
0010,\
3\leftrightarrow 0011, \\
4 &\leftrightarrow &0100,\ 5\leftrightarrow 0101,\ 6\leftrightarrow
0110,\
7\leftrightarrow 0111, \\
8 &\leftrightarrow &1000,\ 9\leftrightarrow 1001,\ A\leftrightarrow
1010,\
B\leftrightarrow 1011, \\
C &\leftrightarrow &1100,\ D\leftrightarrow 1101,\ E\leftrightarrow
1110,\ F\leftrightarrow 1111.
\end{eqnarray*}

To express elements of $R_{2,2}$ we use the ordered basis $\left\{
uv,v,u,1\right\} $. For instance $1+u+uv$ in $R_{2,2}$ is expressed
as $1011$ which is $B$. By considering $\lambda $-four circulant
codes of length $16$
over $R_{2,2}$ we obtain self-dual binary codes with weight enumerators in $%
W_{64,2}$ for various values for $\beta $, these are listed in Table \ref%
{tab:R2}.

\begin{table}[H]
\caption{Self-dual $\protect\lambda $-four circulant codes over
$R_{2,2}$} \label{tab:R2}\centering
\begin{tabular}{||l|l|l|l|c|c||}
\hline $\mathcal{M}_{i}$ & $\lambda $ & $r_{A}$ & $r_{B}$ & $\beta $
in $W_{64,2}$ & $\left\vert Aut\left( \mathcal{M}_{i}\right)
\right\vert $ \\ \hline\hline $\mathcal{M}_{1}$ & $3$ & $\left(
F,0,E,2\right) $ & $\left( 7,5,3,D\right) $ & $0$ & $2^{5}$ \\
\hline $\mathcal{M}_{2}$ & $3$ & $\left( 7,0,C,A\right) $ & $\left(
F,F,9,5\right) $ & $16$ & $2^{5}$ \\ \hline $\mathcal{M}_{3}$ & $3$
& $\left( 3,0,D,4\right) $ & $\left( E,3,F,B\right) $ & $48$ &
$2^{5}$ \\ \hline $\mathcal{M}_{4}$ & $7$ & $\left( B,0,1,C\right) $
& $\left( 9,B,1,2\right) $ & $5$ & $2^{3}$ \\ \hline
$\mathcal{M}_{5}$ & $7$ & $\left( B,0,1,4\right) $ & $\left(
A,7,5,F\right) $ & $8$ & $2^{4}$ \\ \hline $\mathcal{M}_{6}$ & $7$ &
$\left( 3,0,7,A\right) $ & $\left( B,C,D,9\right) $ & $9$ & $2^{3}$
\\ \hline $\mathcal{M}_{7}$ & $7$ & $\left( 7,0,5,C\right) $ &
$\left( 1,3,2,5\right) $ & $12$ & $2^{4}$ \\ \hline
$\mathcal{M}_{8}$ & $7$ & $\left( D,0,F,C\right) $ & $\left(
F,1,7,A\right) $ & $13$ & $2^{3}$ \\ \hline $\mathcal{M}_{9}$ & $7$
& $\left( B,0,1,C\right) $ & $\left( A,5,5,D\right) $ & $16$ &
$2^{5}$ \\ \hline $\mathcal{M}_{10}$ & $7$ & $\left( B,0,F,A\right)
$ & $\left( B,C,D,7\right) $ & $17$ & $2^{3}$ \\ \hline
$\mathcal{M}_{11}$ & $7$ & $\left( 7,0,5,C\right) $ & $\left(
2,7,5,F\right) $ & $24$ & $2^{4}$ \\ \hline $\mathcal{M}_{12}$ & $F$
& $\left( 1,0,2,E\right) $ & $\left( D,3,5,7\right) $ & $0$ &
$2^{5}$ \\ \hline $\mathcal{M}_{13}$ & $F$ & $\left( C,0,3,6\right)
$ & $\left( 1,B,7,1\right) $ & $16$ & $2^{5}$ \\ \hline
$\mathcal{M}_{14}$ & $F$ & $\left( F,0,B,A\right) $ & $\left(
F,B,4,5\right) $ & $48$ & $2^{5}$ \\ \hline $\mathcal{M}_{15}$ & $B$
& $\left( 9,0,F,C\right) $ & $\left( B,6,9,3\right) $ & $5$ &
$2^{3}$ \\ \hline $\mathcal{M}_{16}$ & $B$ & $\left( D,0,3,C\right)
$ & $\left( 6,B,5,3\right) $ & $8$ & $2^{4}$ \\ \hline
$\mathcal{M}_{17}$ & $B$ & $\left( 5,0,B,4\right) $ & $\left(
7,6,D,9\right) $ & $9$ & $2^{3}$ \\ \hline $\mathcal{M}_{18}$ & $B$
& $\left( 5,0,1,E\right) $ & $\left( 9,9,C,B\right) $ & $12$ &
$2^{4}$ \\ \hline $\mathcal{M}_{19}$ & $B$ & $\left( D,0,1,6\right)
$ & $\left( F,1,7,C\right) $ & $13$ & $2^{3}$ \\ \hline
$\mathcal{M}_{20}$ & $B$ & $\left( 5,0,B,C\right) $ & $\left(
E,D,F,5\right) $ & $16$ & $2^{3}$ \\ \hline $\mathcal{M}_{21}$ & $B$
& $\left( B,0,5,C\right) $ & $\left( 7,E,D,7\right) $ & $17$ &
$2^{3}$ \\ \hline $\mathcal{M}_{22}$ & $B$ & $\left( D,0,3,4\right)
$ & $\left( E,9,3,1\right) $ & $24$ & $2^{4}$ \\ \hline
\end{tabular}%
\end{table}

\begin{remark}
In order to construct the codes in Table \ref{tab:R1} the binary
four circulant codes are lifted to $R_{2,1}$. Similarly, to
construct the codes in Table \ref{tab:R2} the binary four circulant
codes are lited to $R_{2,1}$
and then to $R_{2,2}$. This reduces the search field remarkably from $%
2^{32}=4294967296$ to $2^{16}=65536$.
\end{remark}

\section{New binary self-dual codes by extensions\label{newcodes}}

By applying the extension theorems to the self-dual codes
constructed in Section \ref{64section} we were able to obtain new
binary self-dual codes of lengths $66$ and $68$. In particular we
were able to construct $11$ new codes of length $66$ and $34$ new
codes of length $68$. Extensions for self-dual codes were first used
by Brualdi and Pless in \cite{brualdi}. Since then different
versions of extensions applied, for some of these we refer to
\cite{kim, doughertyfrobenius} and \cite{kayayildiz}. The following
extension theorems hold for any commutative Frobenius ring $R$ of
characteristic $2$.

\begin{theorem}
$($\cite{doughertyfrobenius}$)$ \label{ext}Let $\mathcal{C}$ be a
self-dual code over $R$ of length $n$ and $G=(r_{i})$ be a $k\times
n$ generator matrix for $\mathcal{C}$, where $r_{i}$ is the $i$-th
row of $G$, $1\leq i\leq k$. Let $c$ be a unit in $R$ such that
$c^{2}=1$ and $X$ be a vector in $R^{n}$ with $\left\langle
X,X\right\rangle =1$. Let $y_{i}=\left\langle
r_{i},X\right\rangle $ for $1\leq i\leq k$. Then the following matrix%
\begin{equation*}
\left(
\begin{array}{cc|c}
1 & 0 & X \\ \hline
y_{1} & cy_{1} & r_{1} \\
\vdots & \vdots & \vdots \\
y_{k} & cy_{k} & r_{k}%
\end{array}%
\right) ,
\end{equation*}%
generates a self-dual code $\mathcal{C}^{\prime }$ over $R$ of
length $n+2$.
\end{theorem}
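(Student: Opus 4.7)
The plan is to prove the theorem in two steps: (i) $\mathcal{C}'\subseteq(\mathcal{C}')^{\perp}$ by checking that the $k+1$ given generators are pairwise orthogonal, and (ii) $|\mathcal{C}'|=|R|^{(n+2)/2}$ by directly computing the kernel of the natural surjection $R^{k+1}\to\mathcal{C}'$. Together with the Frobenius identity $|\mathcal{C}'|\cdot|(\mathcal{C}')^{\perp}|=|R|^{n+2}$ (the analogue of the lemma cited earlier, applicable because $R$ is Frobenius), these force $|\mathcal{C}'|=|(\mathcal{C}')^{\perp}|$ and hence $\mathcal{C}'=(\mathcal{C}')^{\perp}$.

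Step (i) breaks into three routine inner-product checks, all exploiting that the characteristic of $R$ is $2$. The top row against itself gives $1+0+\langle X,X\rangle=1+1=0$. The top row against the $(i+1)$st gives $y_i+0+\langle X,r_i\rangle=2y_i=0$, by the very definition of $y_i$. Two ``lower'' rows $(i+1)$ and $(j+1)$ contribute $y_iy_j+c^2y_iy_j+\langle r_i,r_j\rangle$, which vanishes since $c^2=1$ and $\mathcal{C}$ is self-orthogonal. So $\mathcal{C}'\subseteq(\mathcal{C}')^{\perp}$.

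For step (ii), consider the map $\psi\colon R^{k+1}\to\mathcal{C}'$ sending $(\alpha_0,\alpha_1,\ldots,\alpha_k)$ to $\alpha_0(1,0,X)+\sum_{i=1}^{k}\alpha_i(y_i,cy_i,r_i)$. I claim its kernel is exactly $\{0\}\times\ker(R^k\to\mathcal{C})$, where $R^k\to\mathcal{C}$ is the presentation of $\mathcal{C}$ by its given generators $r_1,\ldots,r_k$. Indeed, if the image of $(\alpha_0,\ldots,\alpha_k)$ vanishes, the second coordinate reads $c\sum_i\alpha_iy_i=0$, so the unit hypothesis on $c$ gives $\sum_i\alpha_iy_i=0$; the first coordinate then forces $\alpha_0=0$; and the last $n$ coordinates reduce to $\sum_i\alpha_ir_i=0$. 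Hence $|\mathcal{C}'|=|R|\cdot|\mathcal{C}|=|R|\cdot|R|^{n/2}=|R|^{(n+2)/2}$, the middle equality being the self-duality of $\mathcal{C}$. Combining with the Frobenius identity and (i) yields $\mathcal{C}'=(\mathcal{C}')^{\perp}$.

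The main sticking point is the kernel computation in step (ii): over a Frobenius ring that is not a field, $\mathcal{C}$ need not be free, so a naive ``rank plus one'' count is unavailable, and the leverage to separate $\alpha_0$ from the remaining scalars comes precisely from $c$ being a unit, which allows the second coordinate to isolate $\sum_i\alpha_iy_i$ independently of $\alpha_0$. Without this hypothesis the first and second coordinates would impose a single relation rather than two, and the size inequality could fail.
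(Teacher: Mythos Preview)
Your argument is correct. Note, however, that the paper itself does not supply a proof of this theorem: it is quoted from \cite{doughertyfrobenius} and stated without proof, so there is no in-paper argument to compare against. Your two-step approach (self-orthogonality by direct inner-product checks, then cardinality via the kernel of the presentation map combined with the Frobenius identity $|\mathcal{C}'|\cdot|(\mathcal{C}')^{\perp}|=|R|^{n+2}$) is the standard one, and each step is carried out cleanly. In particular, your handling of step~(ii) is the right way to avoid any freeness assumption on $\mathcal{C}$: using the second coordinate and the invertibility of $c$ to force $\sum_i\alpha_i y_i=0$, then reading off $\alpha_0=0$ from the first coordinate, is exactly what makes the kernel identification $\ker\psi=\{0\}\times\ker(R^k\to\mathcal{C})$ work over an arbitrary commutative Frobenius ring. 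Your concluding remark about why the unit hypothesis on $c$ is essential is also on point.
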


A more specific extension method which can be applied to generator
matrices in standard form is as follows:

\begin{theorem}
$($\cite{kayayildiz}$)$ \label{idext}Let $\mathcal{C}$ be a
self-dual code generated by $G=\left( I_{n}|A\right) $ over $R$. If
the sum of the elements
in $i$-th row of $A$ is $r_{i}$ then the matrix:%
\begin{equation*}
G^{\ast }=\left(
\begin{array}{cc|cccccc}
1 & 0 & x_{1} & \ldots & x_{n} & 1 & \ldots & 1 \\ \hline
y_{1} & cy_{1} & \multicolumn{3}{c}{} & \multicolumn{3}{c}{} \\
\vdots & \vdots &  & I_{n} &  &  & A &  \\
y_{n} & cy_{n} &  &  &  &  &  &
\end{array}%
\right) ,
\end{equation*}%
where $y_{i}=x_{i}+r_{i}$, $c$ is a unit with $c^{2}=1$, $X=\left(
x_{1},\ldots ,x_{n}\right) $ and $\left\langle X,X\right\rangle
=1+n$, generates a self-dual code $\mathcal{C}^{\ast }$ over $R$.
\end{theorem}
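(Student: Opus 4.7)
The plan is to derive Theorem \ref{idext} as a direct specialization of Theorem \ref{ext}, by choosing the extending vector $X'\in R^{2n}$ to be the concatenation of $X$ with the all-ones vector of length $n$. Concretely, since $\mathcal{C}$ is generated by $G=(I_n\mid A)$ its length is $2n$, so Theorem \ref{ext} will produce a self-dual code of length $2n+2$, and my task is to pick $X'$ so that the resulting extended matrix coincides with the $G^{\ast}$ in the statement.

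First I would set $X'=(x_1,\ldots,x_n,1,1,\ldots,1)\in R^{2n}$ and verify the hypothesis $\langle X',X'\rangle=1$ of Theorem \ref{ext}. Splitting the inner product by blocks gives $\langle X',X'\rangle=\sum_{i=1}^n x_i^2+\sum_{i=1}^n 1 = \langle X,X\rangle + n$. Using the hypothesis $\langle X,X\rangle=1+n$ and $\operatorname{char}(R)=2$, this equals $1+n+n=1$, as required. Next I would compute the inner products $y_i'=\langle r_i,X'\rangle$ dictated by Theorem \ref{ext}, where $r_i=(e_i\mid a_i)$ is the $i$-th row of $G$ (here $e_i$ is the $i$-th standard basis vector and $a_i$ the $i$-th row of $A$). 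Splitting again, $\langle r_i,X'\rangle = \langle e_i,X\rangle + \langle a_i,(1,\ldots,1)\rangle = x_i + \sum_j a_{ij} = x_i+r_i = y_i$, matching exactly the entries $y_i$ appearing in the first two columns of $G^{\ast}$.

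Finally I would assemble the extended generator matrix given by Theorem \ref{ext}: its top row is $(1,0\mid X')=(1,0,x_1,\ldots,x_n,1,\ldots,1)$ and its remaining $n$ rows are $(y_i,cy_i\mid r_i)=(y_i,cy_i,e_i,a_i)$. This is precisely the matrix $G^{\ast}$ in the statement. Theorem \ref{ext} then guarantees that the code it generates is self-dual over $R$, completing the proof.

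There is essentially no hard step here; the argument is pure bookkeeping, and the only point requiring care is the inner-product identity $\langle X',X'\rangle=\langle X,X\rangle+n$ combined with the characteristic-$2$ cancellation $2n=0$, which is exactly why the normalization in the hypothesis is stated as $\langle X,X\rangle=1+n$ rather than $\langle X,X\rangle=1$.
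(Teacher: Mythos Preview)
Your argument is correct: choosing the extension vector $X'=(x_1,\ldots,x_n,1,\ldots,1)$ in Theorem~\ref{ext} reproduces $G^{\ast}$ exactly, and your verifications of $\langle X',X'\rangle=1$ and $y_i'=x_i+r_i$ are sound. The paper itself does not supply a proof of Theorem~\ref{idext}---it is quoted from \cite{kayayildiz}---so there is no in-paper argument to compare against; your derivation from Theorem~\ref{ext} is precisely the intended one, as the paper introduces Theorem~\ref{idext} as ``a more specific extension method'' following Theorem~\ref{ext}.
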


\subsection{$\mathbb{F}_{2}$-extensions}

The Gray images of the codes in tables \ref{tab:R1} and \ref{tab:R2}
are extremal singly-even self-dual binary codes of length $64$. In
this section, we construct extremal binary self-dual codes of length
$66$ by applying Theorem \ref{ext}. Eleven new codes are obtained.

We recall that a self-dual $\left[ 66,33,12\right] _{2}$-code has a
weight enumerator in
one of the following forms \cite{dougherty};%
\begin{eqnarray*}
W_{66,1} &=&1+\left( 858+8\beta \right) y^{12}+\left( 18678-24\beta
\right)
y^{14}+\cdots \text{ where }0\leq \beta \leq 778, \\
W_{66,2} &=&1+1690y^{12}+7990y^{14}+\cdots \text{ } \\
\text{and }W_{66,3} &=&1+\left( 858+8\beta \right) y^{12}+\left(
18166-24\beta \right) y^{14}+\cdots \text{ where }14\leq \beta \leq
756,
\end{eqnarray*}%
Recently, five new codes in $W_{66,1}$ are constructed in
\cite{karadeniz}. The existence of such codes is known for $\beta
=$0, 1, 2, 3, 5, 6, 8,. . . , 11, 14,. . . ,18, 20,... , 54, 56, 59,
60, 62,... 69, 71,. . . , 74, 76, 77, 78, 80, 83, 84, 86, 87,
92, 94 in $W_{66,1}$. For a list of known codes in $W_{66,3}$ we refer to \cite%
{karadeniz66}.

We construct the codes with weight enumerators $\beta =$19, 61, 75,
79, 81, 82, 85, 88, 89, 90 and 100 in $W_{66,1}$. The extension in
Theorem \ref{ext}
is applied to the binary images of the codes constructed in Section \ref%
{64section} to obtain the new codes. The results are given in Table \ref%
{tab:66} where $\boldsymbol{1}^{\boldsymbol{32}}$ denotes 32
successive 1s in $X$.
\begin{table}[H]
\caption{New extremal binary self-dual codes with weight enumerators in $%
W_{66,1}$ by Theorem \protect\ref{ext} (11 codes)}
\label{tab:66}\centering
\begin{tabular}{||c|c|c||}
\hline
Code & The extension vector $X$ & $\beta $ in $W_{66,1}$ \\
\hline\hline
$\mathcal{M}_{17}$ & $10101010111001110010001101110010\boldsymbol{1}^{%
\boldsymbol{32}}$ & 19 \\ \hline
$\mathcal{M}_{3}$ & $11001100001011100111100101011111\boldsymbol{1}^{%
\boldsymbol{32}}$ & 61 \\ \hline
$\mathcal{L}_{8}$ & $10001011111111011011010110100100\boldsymbol{1}^{%
\boldsymbol{32}}$ & 75 \\ \hline $\mathcal{L}_{8}$ &
\begin{tabular}{l}
$00010101001111110101110111100101$ \\
$01111001110010000111111001100000$%
\end{tabular}
& 79 \\ \hline $\mathcal{L}_{3}$ &
\begin{tabular}{l}
$01100110100001001100000110100000$ \\
$01001101100110110111110101111001$%
\end{tabular}
& 81 \\ \hline $\mathcal{L}_{8}$ &
\begin{tabular}{l}
$01010110111110101100011010100111$ \\
$00010101100101110100110101101001$%
\end{tabular}
& 82 \\ \hline $\mathcal{L}_{3}$ &
\begin{tabular}{l}
$00111101100000000111010010101001$ \\
$00100001110000111110001100010100$%
\end{tabular}
& 85 \\ \hline
$\mathcal{C}_{64,80}$ & $11100000101011010111100100110110\boldsymbol{1}^{%
\boldsymbol{32}}$ & 88 \\ \hline
$\mathcal{C}_{64,80}$ & $10100100001110101110100111000001\boldsymbol{1}^{%
\boldsymbol{32}}$ & 89 \\ \hline
$\mathcal{C}_{64,80}$ & $00011111110111101111001110001011\boldsymbol{1}^{%
\boldsymbol{32}}$ & 90 \\ \hline
$\mathcal{C}_{64,80}$ & $11100001100000000001000010011011\boldsymbol{1}^{%
\boldsymbol{32}}$ & 100 \\ \hline
\end{tabular}%
\end{table}

\subsection{$R_{2,1}$-extensions}

In this section, we obtain new extremal binary self-dual codes of
length $68$ by considering $R_{2,1}$-extensions of the codes
constructed in the previous section. The ring $R_{2,2}$ can be
considered as an extension of $R_{2,1}$.
Throughout this section, $\varphi _{u}$ is the Gray map from $R_{2,2}$ to $%
R_{2,1}$ defined as $\varphi _{u}\left( a+bv\right) =\left(
b,a+b\right) $
where $a,b\in R_{2,1}$. We consider the extensions of the codes in Table \ref%
{tab:R1} as well as the Gray images of the codes in Table
\ref{tab:R2} under $\varphi _{u}$. 39 new extremal binary self-dual
codes of length $68$ are obtained as the binary images of the
extensions.

The weight enumerator of an extremal binary self-dual code of length
$68$ is characterized in \cite{dougherty} as follows:
\begin{eqnarray*}
W_{68,1} &=&1+\left( 442+4\beta \right) y^{12}+\left( 10864-8\beta
\right)
y^{14}+\cdots \text{ , }104\leq \beta \leq 1358\text{,} \\
W_{68,2} &=&1+\left( 442+4\beta \right) y^{12}+\left( 14960-8\beta
-256\gamma \right) y^{14}+\cdots
\end{eqnarray*}%
where $0\leq \gamma \leq 11$ and $14\gamma \leq \beta \leq
1870-32\gamma $. Tsai et al. constructed new extremal self-dual
binary codes of lengths $66$ and $68$ in \cite{tsai}. Recently, $3$
codes with previously unknown weight enumerators in $W_{68,1}$ were
constructed in \cite{jacodes}. Together with
the codes obtained in \cite{tsai, jacodes} the existence of codes in $%
W_{68,1}$ are known for $\beta =$104, 117, 120, 122, 123, 125,\ldots
,168, 170,\ldots ,232, 234, 235, 236, 241, 255, 257,\ldots ,269,
302, 328,\ldots , 336, 338, 339, 345, 347, 355, 401.

We obtain a code with a weight enumerator $\beta =169$ in
$W_{68,1}$.

First codes with $\gamma =4$ and $\gamma =6$ in $W_{68,2}$ are
constructed in \cite{karadeniz68}. Recently, new codes in $W_{68,2}$
are obtained in \cite{kayayildiz,kaya,jacodes} together with these,
codes exist for $W_{68,2} $ when
\begin{eqnarray*}
\gamma  &=&0,\ \beta =44,...,154\text{ or }\beta \in \left\{
2m|m=\text{19,
20, 88, 102, 119, 136 or }78\leq m\leq 86\right\} ; \\
\gamma  &=&1,\ \beta =49,57,59,...,160\text{ or }\beta \in \left\{ 2m|m=%
\text{27, 28, 29, 95, 96 or }81\leq m\leq 89\right\} ; \\
\gamma  &=&2,\ \beta =65,68,69,71,77,81,159\text{ or }\beta \in
\left\{
2m|37\leq m\leq 68,\text{ }70\leq m\leq 81\right\} \text{ or} \\
\beta  &\in &\left\{ 2m+1|42\leq m\leq 69,\text{ }71\leq m\leq
77\right\} ;
\\
\gamma  &=&3,\ \beta =101,117,123,127,133,137,141,145,147,149,153,159,193%
\text{ or } \\
\beta  &\in &\left\{ 2m|m=\text{%
44,45,48,50,51,52,54,...,58,61,63,...,66,68,...,72,74,77,...,81,88,94,98}%
\right\} ; \\
\gamma  &=&4\text{, }\beta \in \left\{ 2m|m=\text{51, 55, 58, 60,
61, 62,
64, 65, 67,...,71, 75,..., 78, 80}\right\} \text{ and } \\
\gamma  &=&6\text{ with }\beta \in \left\{ 2m|m=\text{69, 77, 78, 79, 81, 88}%
\right\} \text{.}
\end{eqnarray*}

In this section, we construct the codes with weight enumerators in
$W_{68,2}$ for $\gamma =0$ and $\beta =178$; $\gamma =1$ and $\beta
=180$; $\gamma =2$ and $\beta =$60, 62, 64, 66, 70, 72, 164, 166,
168, 170, 172, 174, 176, 178, 180, 182, 186; $\gamma =3$ and $\beta
=$94, 107, 118, 120, 156, 168, 172, 180; $\gamma =4$ and $\beta
=$98, 104, 108, 112, 174, 194.

By considering $R_{2,1}$-extensions of codes in Table \ref{tab:R1}
with respect to Theorem \ref{idext} we were able to obtain 14 new
extremal binary self-dual codes, which are listed in Table
\ref{tab:idext}.

\begin{table}[H]
\caption{New codes in $W_{68,2}$ by Theorem \protect\ref{idext} on
$R_{2,1}$ (14 codes)} \label{tab:idext}\centering
\begin{tabular}{||l|l|l|l|l||}
\hline $\mathcal{L}_{i}$ & $X$ & $c$ & $\gamma $ & $\beta $ \\
\hline\hline $\mathcal{L}_{4}$ & $\left( 1313uu0133130u11\right) $ &
$1+u$ & $2$ & $60$
\\ \hline
$\mathcal{L}_{4}$ & $\left( 1131uu011133u011\right) $ & $1$ & $2$ & $62$ \\
\hline
$\mathcal{L}_{4}$ & $\left( 0001u11uu3110300\right) $ & $1$ & $2$ & $64$ \\
\hline $\mathcal{L}_{4}$ & $\left( 00u1u130u111u1u0\right) $ & $1+u$
& $2$ & $66$
\\ \hline
$\mathcal{L}_{4}$ & $\left( uuu30330013101uu\right) $ & $1+u$ & $2$
& $70$
\\ \hline
$\mathcal{L}_{4}$ & $\left( u0u1u13uu333u3u0\right) $ & $1+u$ & $2$
& $72$
\\ \hline
$\mathcal{L}_{3}$ & $\left( u3000uu33u31u031\right) $ & $1$ & $2$ & $166$ \\
\hline $\mathcal{L}_{3}$ & $\left( u1u0u0u11u31uu13\right) $ & $1+u$
& $2$ & $170$
\\ \hline
$\mathcal{L}_{3}$ & $\left( 03u0u00330310u31\right) $ & $1+u$ & $2$
& $172$
\\ \hline
$\mathcal{L}_{3}$ & $\left( u1uuu0u11u31u013\right) $ & $1+u$ & $2$
& $174$
\\ \hline
$\mathcal{L}_{3}$ & $\left( 01000u0110310013\right) $ & $1+u$ & $2$
& $176$
\\ \hline
$\mathcal{L}_{3}$ & $\left( 011300u031111313\right) $ & $1$ & $3$ & $156$ \\
\hline $\mathcal{L}_{3}$ & $\left( 3u131011301u0u10\right) $ & $1+u$
& $3$ & $172$
\\ \hline
$\mathcal{L}_{3}$ & $\left( 103130333010u010\right) $ & $1+u$ & $3$
& $180$
\\ \hline
\end{tabular}%
\end{table}

\begin{example}
Let $\mathcal{C}$ be the code obtained by applying Theorem \ref{ext} for $%
\mathcal{\varphi }_{u}\left( M_{4}\right) $ over $R_{2,1}$ with%
\begin{equation*}
X=\left(
u,1+u,0,0,0,1+u,0,0,1,u,0,1,u,u,1+u,0,1111111111111111\right)
\end{equation*}%
and $c=1+u$ then the binary image of the extension is an extremal
binary
self-dual code of length $68$ with a weight enumerator $\beta =169$ in $%
W_{68,1}$. The code $\mathcal{C}$ is the first extremal binary
self-dual code with this weight enumerator.
\end{example}

Theorem \ref{ext} is applied to codes in Table \ref{tab:R1} and $R_{2,1}$%
-images of codes in Table \ref{tab:R2}. 24 new extremal binary
self-dual codes of length 68 are obtained as Gray images of the
extensions. Similar to Section \ref{64section} lifts can be applied
to the extensions. If $X$ is a
possible extension vector for a free self-dual code $\mathcal{C}$ over $%
R_{2,1}$ then $\pi _{u}\left( X\right) $ is an extension vector for
$\pi _{u}\left( \mathcal{C}\right) $. In order to extend
$\mathcal{C}$ we may
lift an extension vector for $\pi _{u}\left( \mathcal{C}\right) $. Theorem %
\ref{bound} gives an idea on which extension vectors to lift. For
instance,
a possible extension vector for the binary code $\pi _{u}\left( \mathcal{%
\varphi }_{u}\left( M_{12}\right) \right) $ is $\left(
00010111001100110000001000110011\right) $. By considering the lifts
of this vector we were able to obtain new codes with weight
enumerators corresponding to rare parameters $\gamma =4$ and $\beta
=86$, $96$ and $98$. Those are listed in Table \ref{tab:ext}.
Considering lifts reduces the workload remarkably from $4^{32}$ to
$2^{32}$.

\begin{table}[H]
\caption{New codes in $W_{68,2}$ by Theorem \protect\ref{ext} on
$R_{2,1}$ (24 codes)} \label{tab:ext}\centering
\begin{tabular}{||l|l|l|l|l||}
\hline Code & $X$ & $c$ & $\gamma $ & $\beta $ \\ \hline\hline
$\mathcal{L}_{3}$ & $\left( 31u1u11133u10u113u10u33013010111\right)
$ & $1+u $ & $0$ & $178$ \\ \hline $\mathcal{L}_{3}$ & $\left(
10u1u033uu3u00u03101010uu10u3u0u\right) $ & $1$ & $1$ & $180$ \\
\hline $\mathcal{L}_{12}$ & $\left(
11330u11u1103101u3u3101u31uu33u\right) $ & $1$ & $2$ & $164$ \\
\hline $\mathcal{L}_{8}$ & $\left(
0uuu0011113u13u01303113033311003\right) $ & $1$ & $2$ & $168$ \\
\hline $\mathcal{L}_{8}$ & $\left(
00000031313033u031u3333u33311003\right) $ & $1+u$ & $2$ & $178$ \\
\hline $\mathcal{L}_{8}$ & $\left(
u0uuu033111033uu1301113u13331uu1\right) $ & $1$ & $2$ & $180$ \\
\hline $\mathcal{L}_{8}$ & $\left(
u0u00011313u31u011u1113u33313u01\right) $ & $1$ & $2$ & $182$ \\
\hline $\mathcal{L}_{3}$ & $\left(
u3uuu33uu10uu00103010u001u030u13\right) $ & $1+u$ & $2$ & $186$ \\
\hline $\mathcal{\varphi }_{u}\left( M_{12}\right) $ & $\left(
13331031u0u1133u1111111111111111\right) $ & $1$ & $3$ & $94$ \\
\hline $\mathcal{\varphi }_{u}\left( M_{4}\right) $ & $\left(
11u301u33u0133u3u1u3u0uu010330uu\right) $ & $1$ & $3$ & $107$ \\
\hline $\mathcal{\varphi }_{u}\left( M_{12}\right) $ & $\left(
11333u3100u1133u1331133313313133\right) $ & $1+u$ & $3$ & $118$ \\
\hline $\mathcal{\varphi }_{u}\left( M_{17}\right) $ & $\left(
1310u30u330100001111111111111111\right) $ & $1+u$ & 3 & 120 \\
\hline $\mathcal{L}_{3}$ & $\left( u u u 3 1 0 u 1 1 u 3 u 0 0 u 1 u
u u 3 0 3 u 3 u 3 u 1 3 3 3 3\right) $ & $1 $ & $3$ & $164$ \\
\hline $\mathcal{L}_{3}$ & $\left( u u 0 3 1 u 0 3 1 0 3 u 0 u 0 1 u
0 0 1 0 3 u 3 u 1 u 1 3 1 1 1\right) $ & $1+u $ & $3$ & $166$ \\
\hline $\mathcal{L}_{3}$ & $\left(
uuu11u031u3u0u01u003u3u303u31131\right) $ & $1+u $ & $3$ & $168$ \\
\hline $\mathcal{L}_{3}$ & $\left( u 0 0 3 1 0 0 3 3 0 1 u u u u 3 u
0 0 1 u 1 0 3 u 3 u 3 1 3 3 1\right) $ & $1+u $ & $3$ & $174$ \\
\hline $\mathcal{\varphi }_{u}\left( M_{12}\right) $ & $\left( 0 0 0
1 0 1 3 3 0 0
1 1 u u 3 3 0 u 0 0 0 u 1 u 0 0 1 1 u u 3 3\right) $ & $1+u$ & $4$ & $86$ \\
\hline $\mathcal{\varphi }_{u}\left( M_{12}\right) $ & $\left( u u 0
1 u 1 1 1 0 u
3 3 u 0 3 3 u u 0 u 0 0 3 u 0 u 3 1 u 0 1 3\right) $ & $1+u$ & $4$ & $96$ \\
\hline $\mathcal{\varphi }_{u}\left( M_{12}\right) $ & $\left(
0001u3130u31uu1100uuuu1uu0130u31\right) $ & $1+u$ & $4$ & $98$ \\
\hline $\mathcal{\varphi }_{u}\left( M_{12}\right) $ & $\left(
u3u3u1110u3310u31111111111111111\right) $ & $1$ & $4$ & $104$ \\
\hline $\mathcal{\varphi }_{u}\left( M_{12}\right) $ & $\left(
u3010333003110031331133333113313\right) $ & 1 & $4$ & $108$ \\
\hline $\mathcal{\varphi }_{u}\left( M_{12}\right) $ & $\left(
u1u10313001110u33313331111331111\right) $ & 1 & $4$ & $112$ \\
\hline $\mathcal{L}_{8}$ & $\left(
00u00u33111u130011u31130111310u3\right) $ & $1+u$ & $4$ & $174$ \\
\hline $\mathcal{L}_{8}$ & $\left(
u300033003u0uuu10303000u1uu10u31\right) $ & $1$ & $4$ & $194$ \\
\hline
\end{tabular}%
\end{table}

\begin{remark}
The binary generator matrices of the new extremal binary self-dual
codes of
lengths $66$ and $68$ that are constructed in tables \ref{tab:66}, \ref%
{tab:idext} and \ref{tab:ext} are available online at
\cite{webpage}.
\end{remark}

\textbf{Acknowledgements}\newline \vspace{2mm}The authors would like
to thank Bahattin Y\i ld\i z for his valuable comments.

\end{document}